\def\adl@drawiv#1#2#3{
        \hskip.5\tabcolsep
        \xleaders#3{#2.5\@tempdimb #1{1}#2.5\@tempdimb}%
                #2\z@ plus1fil minus1fil\relax
        \hskip.5\tabcolsep}
\newcommand{\cdashlinelr}[1]{%
  \noalign{\vskip\aboverulesep
          \global\let\@dashdrawstore\adl@draw
          \global\let\adl@draw\adl@drawiv}
  \cdashline{#1}
  \noalign{\global\let\adl@draw\@dashdrawstore
          \vskip\belowrulesep}}
\newtheorem{assumption}{Assumption}
\newcolumntype{p}{>{\columncolor{gray!10}}r}
\newcolumntype{R}{S[table-format=1.2]}
\DeclareMathOperator{\argmax}{arg\,max}
\DeclareMathOperator{\argsort}{arg\,sort}
\begin{document}
\title{On (Normalised) Discounted Cumulative Gain as an Off-Policy Evaluation Metric for Top-$n$ Recommendation}

\author{Olivier Jeunen}
\affiliation{
  \institution{ShareChat}
  \city{Edinburgh}
  \country{United Kingdom}
}
\author{Ivan Potapov}
\affiliation{
  \institution{ShareChat}
  \city{London}
  \country{United Kingdom}
}
\author{Aleksei Ustimenko}
\affiliation{
  \institution{ShareChat}
  \city{London}
  \country{United Kingdom}
}

\begin{abstract}
Approaches to recommendation are typically evaluated in one of two ways:
\begin{enumerate*}
    \item via a (simulated) online experiment, often seen as the \emph{gold standard}, or
    \item via some offline evaluation procedure, where the goal is to approximate the outcome of an online experiment.
\end{enumerate*}
Several offline evaluation metrics have been adopted in the literature, inspired by ranking metrics prevalent in the field of Information Retrieval.
(Normalised) Discounted Cumulative Gain (nDCG) is one such metric that has seen widespread adoption in empirical studies, and higher (n)DCG values have been used to present new methods as  the \emph{state-of-the-art} in top-$n$ recommendation for many years.\looseness=-1

Our work takes a critical look at this approach, and investigates \emph{when} we can expect such metrics to approximate the gold standard outcome of an online experiment.
We formally present the assumptions that are necessary to consider DCG an \emph{unbiased} estimator of online reward and provide a derivation for this metric from first principles, highlighting where we deviate from its traditional uses in IR.
Importantly, we show that normalising the metric renders it \emph{inconsistent}, in that even when DCG is unbiased, ranking competing methods by their normalised DCG can invert their relative order.
Through a correlation analysis between off- and on-line experiments conducted on a large-scale recommendation platform, we show that our \emph{unbiased} DCG estimates strongly correlate with online reward, even when some of the metric's inherent assumptions are violated.
This statement no longer holds for its normalised variant, suggesting that nDCG's practical utility may be limited.
\end{abstract}

\begin{CCSXML}
<ccs2012>
   <concept>
       <concept_id>10002951.10003317.10003359</concept_id>
       <concept_desc>Information systems~Evaluation of retrieval results</concept_desc>
       <concept_significance>300</concept_significance>
       </concept>
   <concept>
       <concept_id>10002951.10003317.10003347.10003350</concept_id>
       <concept_desc>Information systems~Recommender systems</concept_desc>
       <concept_significance>500</concept_significance>
       </concept>
   <concept>
       <concept_id>10002950.10003648.10003662</concept_id>
       <concept_desc>Mathematics of computing~Probabilistic inference problems</concept_desc>
       <concept_significance>300</concept_significance>
       </concept>
 </ccs2012>
\end{CCSXML}

\ccsdesc[500]{Information systems~Recommender systems}
\ccsdesc[300]{Information systems~Evaluation of retrieval results}
\ccsdesc[300]{Mathematics of computing~Probabilistic inference problems}

\keywords{Offline Evaluation; Off-Policy Evaluation; Counterfactual Inference}

\maketitle

\section{Introduction \& Motivation}
Recommender systems power algorithmic decision-making on platforms across the web.
They occur in many different application domains, but all are centred around a similar question:
\begin{center}
    ``\emph{What} content do we recommend to \emph{whom}?''
\end{center}

Research and applications in the field of recommender systems have undergone a large shift in the last few decades, moving from \emph{rating}~\cite{Bennet2007} to \emph{item} prediction~\cite{Steck2013} and, more recently, embracing an \emph{interventionist} view~\cite{Joachims_London_Su_Swaminathan_Wang_2021}.
Throughout this evolution, the evaluation practices that are commonly adopted in the field have undergone a parallel shift.
Indeed, whereas the continuous and explicit nature of ratings lends itself to the Root Mean Square Error (RMSE) metric; discrete item predictions are more often viewed in a \emph{ranking} setup where Information Retrieval (IR) metrics like Precision, Recall, and (normalised) Discounted Cumulative Gain (nDCG) have been widely adopted~\cite{Valcarce2020}; and the interventionist view lends itself particularly well to \emph{counterfactual} and \emph{off-policy} estimation techniques that can be directly mapped to online metrics~\cite{Vasile2020, Saito2021}.

Despite the fact that evaluation methods are a core topic in recommender systems research that enjoy a significant amount of interest, some problems remain that are fundamentally hard to solve.
Online experiments (i.e. randomised controlled trials or A/B-tests) are seen as the \emph{gold standard} of evaluation practices, and deservedly so: by leveraging interactions with users, they allow us to directly measure an array of online metrics for a given recommendation model~\cite{Kohavi2020}.
Nevertheless, as they are costly to conduct and the academic research community seldom has access to platforms with real users~\cite{Jeunen2019DS}, \emph{offline} evaluation practices are a common alternative used to showcase newly proposed methods' performance, both in the research literature and in industry applications (often as a precursor to an online experiment).

Even though the need for offline evaluation methods that mimic the outcome of an online experiment is clear, the reality is that existing methods seldom do so satisfactorily~\cite{Beel2013,Garcin2014,Rossetti2016,Jeunen2018}, even if advances in counterfactual estimation techniques have recently led to several success stories~\cite{Gilotte2018,Gruson2019}.
The reasons for this ``offline-online mismatch'' are manifold --- and some can be attributed due to offline evaluation inconsistencies (even before we compare them to online results).
First, there is a fundamental mismatch between many ``next-item prediction'' metrics (e.g. recall) and online metrics (e.g. click-through rate).
Although this can be partially alleviated by the interventionist lens~\cite{Jeunen2019REVEAL_EVAL}, a majority of published research remains focused on IR-inspired metrics.
Second, a myriad of evaluation options can lead to contrasting results~\cite{Canamares2020}, and several offline metrics differ in robustness and discriminative power~\cite{Valcarce2020}.
Third, sampled versions of these metrics have been adopted for efficiency reasons, but are \emph{inconsistent} with their unsampled counterparts and should thus be avoided~\cite{Krichene2020,Li2020,Canamares2020RecSys}.
Fourth, multiple recent works have reported troubling trends in the reproducibility of widely cited methods~\cite{FerrariDacrema2019,FerrariDacrema2021,Rendle2020,Rendle2022,Rendle2019,Cavenaghi2023}---similar issues plagued the adjacent IR field a decade earlier~\cite{Armstrong2009}. 

This article aims to contribute to this important line of research, by focusing on the widely adopted (normalised) Discounted Cumulative Gain metric~\cite{Jarvelin2002}.
Focusing on the purpose that offline metrics serve, we ask a simple question:

\begin{center}    
``\emph{When can we expect (n)DCG to accurately approximate\\the gold standard outcome of an online experiment?}''
\end{center}

The main scientific contributions we present in pursuit of answering this question, are the following:
\begin{enumerate}
    \item We formally present the assumptions that are necessary to consider DCG an \emph{unbiased} estimator of online reward, providing a derivation for this metric from first principles whilst linking it to off-policy estimation (\S~\ref{sec:dcg_unbiased}).
    \item We formally prove that the widespread practice of \emph{normalising} the DCG metric renders it inconsistent with respect to DCG, in that the ordering given by nDCG can differ from that given by DCG, and provide empirical evidence (\S~\ref{sec:ndcg_inconsistent}).
    \item Empirical results from off- and online experiments on a large-scale recommendation platform show that:
    \begin{enumerate}
        \item the unbiased DCG metric strongly correlates with online metrics over time, whereas nDCG does not (\S ~\ref{sec:RQ1}),
        \item whilst differences in online metrics directionally align with differences in both nDCG and DCG, the latter can enjoy improved sensitivity to detect statistically significant online improvements (\S ~\ref{sec:RQ3}).
    \end{enumerate}
    \item We revisit the assumptions that are necessary to consider DCG an unbiased estimator: discussing when they can be reasonable and how we can relax them, giving rise to a research agenda and ways forward (\S ~\ref{sec:beyond}).
\end{enumerate}
\section{Background \& Related Work}\label{sec:relwork}
Offline evaluation methods for recommender systems have been studied for decades~\cite{Herlocker2004}, and their shortcomings are widely reported.
\citeauthor{Canamares2020} provide an overview of common approaches, highlighting how different choices (in pre-processing, metrics, splits,...) lead to contrasting results~\cite{Canamares2020}.
More problematic, \citeauthor{Ji2023} show that common train-test-split procedures lead to data leakage issues that affect conclusions drawn from offline experiments~\cite{Ji2023}.
Other recent work shows that sampled versions of evaluation metrics that only rank a sample of the item catalogue instead of the full catalogue, are inconsistent with the full metrics, leading the authors to explicitly discourage their use~\cite{Krichene2020,Li2020,Canamares2020RecSys}.
Even when we manage to steer clear from these pitfalls, biases in logged data can give rise to undesirable phenomena like Simpson's paradox~\cite{Jadidinejad2021,Jeunen2023_CONSEQUENCES}.
General de-biasing procedures have been proposed to this end~\cite{LYang2018}, as well as \emph{off-policy} estimation techniques~\cite{su2019cab, Su2020_ICML, Ma2020, Kiyohara2022} and methods to evaluate competing estimators~\cite{Saito2021_Robustness,Su2020_Adaptive,Udagawa2022}.

Most traditional ranking evaluation metrics stem from IR.
\citeauthor{Valcarce2020} find that nDCG offers the best discriminative power among them~\cite{Valcarce2020}.
Findings like this reinforce the community's trust in nDCG, and it is commonly used to compare novel top-$n$ recommendation methods to the state-of-the-art, also in reproducibility studies~\cite{FerrariDacrema2019,FerrariDacrema2021,Rendle2020,Rendle2022}.
\citeauthor{Ferrante2021} argue that while nDCG can be preferable because it is bounded and normalised, problems can arise because the metric is not easily transformed to an interval scale~\cite{Ferrante2021}.
They all do not consider the consistency of (n)DCG.

Other recent work highlights that commonly used online evaluation metrics relying on \emph{experimental} data (e.g. click-through rate), differ fundamentally from commonly used offline evaluation metrics that rely on \emph{organic} interactions (e.g. hit-rate)~\cite{Jeunen2019DS,Jeunen2019REVEAL_EVAL}.
\citeauthor{Deffayet2023} argue that a similar mismatch is especially pervasive when considering reinforcement learning methods~\cite{Deffayet2023}, and \citeauthor{Diaz2021} emphasises critical issues with interpreting organic implicit feedback as a user preference signal~\cite{Diaz2021}.
These works together indicate a rift between online and offline experiments.

Several open-source simulation environments have been proposed as a way to bypass the need for an online ground truth result~\cite{rohde2018recogym,Ie2019Recsim,Saito2021OBP}, and several works have leveraged these simulators to empirically validate algorithmic advances in bandit learning for recommendation~\cite{JeunenKDD2020,Jeunen2021A,Jeunen2021B,Sakhi2020,Bendada2020}.
Nevertheless, whether conclusions drawn from simulation results accurately reflect those drawn from real-world experiments, is still an open research question.

In this work, we focus on the \emph{core} purpose that offline evaluation metrics serve: to give rise to offline evaluation methodologies that accurately mimic the outcome of an online experiment.
To this end, we focus on the widely used (n)DCG metric, and aim to take a step towards closing the gap between the off- and online paradigms.
\section{Formalising the problem setting}\label{sec:problemsetting}
Throughout, we represent the domain for a random variable $X$ as $\mathcal{X}$ and a specific instantiation as $x$, unless explicitly mentioned otherwise.
We deal with a session-based feed recommendation setup, describing a user's journey on the platform as a trajectory $\tau$.
Contextual features describing a trajectory are encoded in $x \in \mathcal{X}$, which includes features describing the user $u \in \mathcal{U}$ and possible historical interactions they have had with items on the platform.
In line with common notation in the decision-making literature, we will refer to these items as \emph{actions} $a \in \mathcal{A}$.
As is common in real-world systems, the size of the item catalogue (i.e. the action space $|\mathcal{A}|$) can easily grow to be in the order of  hundreds of millions, prohibiting us to score and rank the entire catalogue directly.
This is typically dealt with through a two-stage ranking setup, where a more lightweight \emph{candidate generator} stage is followed by a \emph{ranking} stage that decides the final personalised order in which we present items to the user~\cite{VanDang2013, Covington2016, Ma2020}.

We adopt generalised probabilistic notation for two-stage rankers in this work, but stress that our insights are model-agnostic and directly applicable to single-stage rankers as well. 

Let $\mathcal{A}^{k}$ denote all subsets of $k$ actions: $\mathcal{A}^{k} \subseteq 2^{\mathcal{A}}: \forall a^{k} \in \mathcal{A}^{k}, |a^{k}|=k$.
A candidate generation policy $\mathcal{G}$ defines a conditional probability distribution over such sets of \emph{candidate} actions, given a context:
$
    \mathcal{G}(A^{k}|X)\coloneqq\mathsf{P}(A^{k}|X,\mathcal{G}).
$
We will use the shorthand notation $\mathcal{G}(X)$ when context allows it.
Note that this general notation subsumes other common scenarios, such as candidate generation policies that are deterministic, consist of ensembles, or simply yield the entire item catalogue $\mathcal{A}$ (i.e. \emph{single-stage} ranking systems).

After obtaining a set of candidate items for context $x$ by sampling $a^{k} \sim \mathcal{G}(x)$, we pass them on to the ranking stage.

In line with our probabilistic framework, we have a ranking policy $\mathcal{R}$ that defines a conditional probability distribution over rankings (i.e. permutations of $a^{k}$).
Define $\bm{\sigma}$ as a possible permutation over $A^{k}$.
Formally, we have:
$
    \mathcal{R}(\bm{\sigma}|A^{k}, X)\coloneqq\mathsf{P}(\bm{\sigma}|A^{k}, X,\mathcal{R}).
$
When context allows it, we will use shorthand notation $\mathcal{R}(X)$ to absorb the candidate generation step. Then, the ranker is given by: 
\begin{equation}
    \mathcal{R}(X) \coloneqq \sum_{a^{k} \in \mathcal{A}^{k}}\mathsf{P}(\bm{\sigma}|A^{k}=a^{k}, X,\mathcal{R})\mathsf{P}(A^{k}=a^{k}|X,\mathcal{G}).
\end{equation}

Now, for a given context $x$, we can obtain rankings over actions by sampling $\sigma=(a_{1},\ldots,a_{k}) \sim \mathcal{R}(x)$.
These rankings are then presented to users, who scroll through the ordered list of items and \emph{view} them along the way.
We will not \emph{yet} restrict our setup to a specific user model that describes \emph{how} users interact with rankings, but introduce a binary random variable $V$ to indicate whether a user has viewed a given item.\footnote{Note that this problem setting deviates from traditional work dealing with web search in IR, where multiple items are shown on screen and item-specific view events cannot be disentangled trivially.
In contrast, our items take up most of the user's mobile screen when presented, and we are able to deduce accurate item-level view labels from logged scrolling behaviour. See e.g.~\citet{Jeunen2023_C3PO} for work dealing with this setting.}
As such, we will assume that logged trajectories only contain items that were viewed by the user (i.e. $V=1$).
That is, if the user abandons the feed after action $a_{i}$ shown at rank $R=i$, we do not log samples for actions $a_{j}, \forall j>i$.

Users can not only view items, but they can interact with them in several ways.
These interaction signals could be seen as the \emph{reward} or \emph{(relevance) label}.
Following traditional notation where rewards are \emph{clicks} we will denote them by random variable $C$.
Nevertheless, these signals are general and can be binary (e.g. likes), real-valued (e.g. revenue), or higher-dimensional to support multiple objectives (e.g. diversity, satisfaction, and fairness~\cite{Mehrotra2018,Mehrotra2020}).

As users can interact with every item separately, we define $C^{k} = (c_1, \ldots, c_{k})$ as the logged rewards over all ranks.
We do not place any restrictions on the reward distribution yet, so the reward at any given rank can be dependent on actions at other ranks: $(c_{1}, \ldots, c_{k})\sim\mathsf{P}(C^{k}|X,A_{1}, \ldots, A_{k})$.
Now, a trajectory consists of contextual information, as well as a sequence of user-item interactions with observed reward labels: $\tau = \{x_{\tau}, (a_1,c_{1}),\ldots(a_{t},c_{t})\}$, where $|\tau|=t$.
The true metric of interest that we care about is the expectation of reward, over contexts sampled from an unknown marginal distribution $x\sim\mathsf{P}(X)$, candidates sampled from our candidate generator $a^{k} \sim \mathcal{G}(x)$, rankings sampled from our ranker $(a_{1},\ldots,a_{k})\sim\mathcal{R}(a^{k},x)$, and rewards sampled from the unknown reward distribution $(c_{1}, \ldots, c_{k})\sim\mathsf{P}(C^{k}|X=x,A_{1}=a_{1}, \ldots, A_{k}=a_{k})$.
We will denote with $C$ the sum of rewards over all observed ranks: $C = \sum_{i=1}^{|\tau|}c_{i}$.
Note that this notation generalises typical evaluation metrics that are used in real-world recommender systems, such as per-item dwell-time or counters of engagement signals.

In order to obtain an estimate for our true metric of interest $\mathbb{E}[C]$, we can perform an online experiment.
Indeed, in doing so, we effectively sample from the above-mentioned distributions and obtain an empirical estimate of our metric by averaging observed samples.
For a dataset $\mathcal{D}_{0}$ containing logged interactions under policies $\mathcal{G}_{0}$ and $\mathcal{R}_{0}$, Eq.~\ref{eq:online_exp} shows how to obtain this empirical estimate:
\begin{equation}\label{eq:online_exp}
    \mathop{\mathbb{E}}\limits_{(a_{1},\ldots,a_{k}) \sim \mathcal{R}_{0}(x)}[C] \approx \frac{1}{|\mathcal{D}_{0}|\cdot|\tau|}\sum_{\tau \in \mathcal{D}_{0}}\sum_{i=1}^{|\tau|}c_{i}.
\end{equation}
As we directly measure the quantity of interest that depends on the deployed policies $\mathcal{G}_{0}$ and $\mathcal{R}_{0}$, this online estimator is often seen as the gold standard.
Nevertheless, it is costly to obtain.
Indeed, as has been widely reported~\cite{Jeunen2019DS,Gilotte2018,Larsen2023}, online experiments require us to:
\begin{enumerate*}
    \item bring hypotheses for new policies up to production standards for an initial test,
    \item wait several days or weeks to deduce statistical significant improvements, and
    \item possibly harm user experience when the policies are performing subpar.
\end{enumerate*}
Furthermore, several pitfalls arise when setting up or interpreting results from online A/B experiments~\cite{Kohavi2022,Jeunen2023_misassumption}.

For these reasons, we want to be able to perform accurate offline experiments.
That is, given a dataset of interactions $\mathcal{D}_{0}$ that were logged under the production policies $\mathcal{G}_{0}$ and $\mathcal{R}_{0}$ (often referred to as \emph{logging} policies), we want to estimate what the reward \emph{would have been} if we had deployed a new policy $\mathcal{R}$ instead (assume $\mathcal{R}$ includes sampling candidates from $\mathcal{G}$).
The goal at hand is thus to devise some function $f$ that takes in a dataset of interactions collected under the logging policy, and is able to approximate the ground truth metric for a target policy $\mathcal{R}$, as shown in Eq.~\ref{eq:offline_exp}:
\begin{equation}\label{eq:offline_exp}
    \mathop{\mathbb{E}}\limits_{(a_{1},\ldots,a_{k}) \sim \mathcal{R}(x)}[C] \stackrel{?}{\approx} f(\mathcal{D}_{0},\mathcal{R}).
\end{equation}
\section{Discounted Cumulative Gain as an unbiased offline evaluation metric}\label{sec:dcg_unbiased}
In reality, this problem can be very complex.
Indeed, the reward that we obtain from presenting a certain ranking to a user can depend on the entire slate at once (of which there are $n!$ versions in a top-$n$ setting), and will be non-i.i.d. over trajectories (i.e. the reward distribution can depend on a user's state, influenced by actions we have taken in the past).

As is typical in machine learning research, we require assumptions that make the problem more tractable.
These assumptions are flawed, but they give us a starting point and a strong foundation to build upon for future iterations.
Note that we will lay out the \emph{specific} assumptions that are \emph{necessary} to motivate the use of Discounted Cumulative Gain as an offline evaluation metric---we discuss ways of relaxing these assumptions in Section~\ref{sec:beyond}.

\begin{assumption}[reward independence across trajectories]\label{ass:no_RL}
The reward for a context-action pair $(X,A_{i})$ in trajectory $\tau$ is independent of the rankings presented in other trajectories $\tau^{\prime} \in \mathcal{D}_{\setminus \tau}$.
\end{assumption}

\begin{assumption}[position-based model~\cite{Craswell2008}]\label{ass:pbm}
We follow the position-based model (PBM) to describe user scrolling behaviour, implying that the probability of a user viewing an item is only dependent on its rank and described by $\mathsf{P}(V|R)$.
\end{assumption}

\begin{assumption}[reward independence across ranks]\label{ass:no_slate}
The reward for a context-action pair $(X,A_{i})$ is independent of other actions in the user trajectory.
Formally, $C_{i} \perp \!\!\! \perp A_{j} | X, A_{i} \forall j \neq i$.
We describe the resulting reward distribution as $\mathsf{P}(C_{i}|X,A_{i},R)$.
\end{assumption}

\begin{assumption}[examination hypothesis~\cite{Craswell2008}]\label{ass:exam}
The reward for an action $A$ shown at rank $R$ is dependent on its inherent quality (unobserved random variable $Q$), and whether it was viewed $V$.
These quantities relate as:
\begin{equation}\label{eq:quality}
\begin{gathered}
    \mathsf{P}(C|X,A,R) = \mathsf{P}(Q|X,A) \cdot \mathsf{P}(V|R),\\
    {\text{ which implies }} \mathsf{P}(Q|X,A)=\frac{\mathsf{P}(C|X,A,R)}{\mathsf{P}(V|R)}.
\end{gathered}
\end{equation}
\end{assumption}
Asm.~\ref{ass:no_RL} allows us to avoid reinforcement learning scenarios, and Asm.~\ref{ass:pbm} prohibits cascading behaviour (that would give rise to other metrics, such as ERR~\cite{Chapelle2009}).
Asm.~\ref{ass:no_slate} allows us to avoid modelling entire slates as individual actions (leading to a combinatorial explosion of the action space), and through Asm.~\ref{ass:exam}, Eq.~\ref{eq:quality} estimates the unobserved context-dependent \emph{quality} of a given item from observable quantities alone.
From Eq.~\ref{eq:online_exp}, we can now rewrite the expected reward we obtain under a ranking policy $\mathcal{R}$ as:
\begin{flalign}\label{eq:dcg_subtle}
  \begin{aligned}
    \mathop{\mathbb{E}}\limits_{(a_{1},\ldots,a_{k}) \sim \mathcal{R}(x)}[C] &\approx \\
    \frac{1}{|\mathcal{D}_{0}|\cdot|\tau|}\sum_{\tau \in \mathcal{D}_{0}}\sum_{i=1}^{|\tau|}&\mathsf{P}(Q=1|X=x_{\tau},A=a_{i})\cdot\mathsf{P}(V=1|R=i).    
\end{aligned}
\end{flalign}
Through the position-based model and the examination hypothesis, items shown at lower ranks are \emph{discounted}.
Because we assumed independence of rewards across ranks, rewards observed at different ranks are \emph{cumulative}.
A key insight here is that the ranking policy $\mathcal{R}$ only affects the rank $i$ at which an item is shown, and as such, the exposure probability that is allocated to the item $a_{i}$.
We formalise \emph{exposure} as the expected number of views a target item $a^{\prime}$ will obtain under a given context $x$, for candidate generation and ranking policies $\mathcal{G},\mathcal{R}$:
\vspace{-3ex}
\begin{flalign}\label{eq:exposure}
\begin{gathered}
\mathop{\mathbb{E}}\limits_{\mathcal{G},\mathcal{R}}[V|X=x,A=a^{\prime}] = \sum\limits_{A^{k} \in \mathcal{A}^{k}} \Bigg( \mathcal{G}\left(A^{k}|X=x\right)\cdot\Bigg.\\
\Bigg.\hspace{-5ex}\sum\limits_{(a_{1},\ldots,a_{k}) \in S(A^{k})} \hspace{-5ex}\mathcal{R}\left((a_{1},\ldots,a_{k})|A^{k}, X=x\right) \sum\limits_{i=1}^{k} \mathsf{P}(V=1|R=i) \cdot\mathbb{1}_{\{a_i=a^{\prime}\}}\Bigg).
\vspace{-3ex}
\end{gathered}
\end{flalign}
Note that this general notation accommodates recommendation scenarios where we retrieve and rank $k$ candidates but only show the top-$n$ to the user, where $k>n$, by simply defining $\mathsf{P}(V=1|R=j)=0 \forall j=n+1,\ldots,k$.
By encoding cut-offs directly in the position bias model, we forgo the need to consider metrics like DCG@$n$~\cite{Valcarce2020}.

Recent work in unbiased learning-to-rank leverages similar exposure definitions to jointly combat selection and position bias through Inverse Propensity Score (IPS) weighting~\cite{Oosterhuis2020,Gupta2023}.
\begin{assumption}[full support of the logging policy~\cite{Owen2013}]\label{ass:full_support}
Given context $x$, any item that would be assigned non-zero exposure under the target policies $(\mathcal{G}, \mathcal{R})$ has non-zero exposure under the logging policies $(\mathcal{G}_{0}, \mathcal{R}_{0})$:
\begin{equation}
   \forall a \in \mathcal{A}: \mathop{\mathbb{E}}\limits_{\mathcal{G},\mathcal{R}}[V|X=x,A=a] > 0\Rightarrow \mathop{\mathbb{E}}\limits_{\mathcal{G}_{0},\mathcal{R}_{0}}[V|X=x,A=a] > 0.
\end{equation}
\end{assumption}
Through Asm.~\ref{ass:full_support} and Eq.~\ref{eq:exposure}, we can now formulate an \emph{importance sampling} estimator for the reward under $(\mathcal{G},\mathcal{R})$ given data collected under $(\mathcal{G}_{0}, \mathcal{R}_{0})$.
Let $\varepsilon(x,a)\coloneqq\mathbb{E}_{\mathcal{G},\mathcal{R}}[V|X=x,A=a]$ and $\varepsilon_{0}(x,a)\coloneqq\mathbb{E}_{\mathcal{G}_{0},\mathcal{R}_{0}}[V|X=x,A=a]$.
\begin{equation}\label{eq:unbiased_reward}
\mathop{\mathbb{E}}\limits_{\mathcal{G},\mathcal{R}}[C] = \mathop{\mathbb{E}}\limits_{\mathcal{G}_{0},\mathcal{R}_{0}}\left[C \cdot \frac{\varepsilon}{\varepsilon_{0}}\right] \approx \frac{1}{|\mathcal{D}_{0}|\cdot|\tau|}\sum_{\tau \in \mathcal{D}_{0}}\sum_{i=1}^{|\tau|} c_{i} \cdot \frac{\varepsilon(x_{\tau},a_{i})}{\varepsilon_{0}(x_{\tau},a_{i})}.
\end{equation}
Eq.~\ref{eq:unbiased_reward} provides a general unbiased estimator for the reward under $(\mathcal{G}, \mathcal{R})$, computed from data collected under $(\mathcal{G}_{0}, \mathcal{R}_{0})$.\footnote{Note that, for general two-stage ranking scenarios, this is a novel contribution to the research literature in and of itself. Existing work on off-policy corrections in two-stage recommender systems only considers a top-$1$ scenario, instead of a ranking policy~\cite{Ma2020}.}
For simplicity of notation, but without loss of generality, we now restrict ourselves to a deterministic ranking policy $\mathcal{R}$ and assume $\mathcal{G}$ is constant (i.e. $\mathcal{G}\equiv\mathcal{G}_{0}$).
With a slight abuse of notation, we briefly denote with $\mathcal{R}(x,a)$ the \emph{rank} at which item $a$ is placed when policy $\mathcal{R}$ is presented with context $x$.
In doing so, we observe that the importance weights in Eq.~\ref{eq:unbiased_reward} can be simplified to:
$ 
    \frac{\varepsilon(x,a)}{\varepsilon_{0}(x,a)}=\frac{\mathsf{P}(V=1|R=\mathcal{R}(x,a))}{\mathsf{P}(V=1|R=\mathcal{R}_{0}(x,a))}.$  
We stress again that we simply adopt this view for simplified notation, but that the derivation holds for general \emph{stochastic} two- or single-stage ranking systems alike.

Recall from Eq.~\ref{eq:quality} that $\mathsf{P}(Q|X,A)=\frac{\mathsf{P}(C|X,A,R)}{\mathsf{P}(V|R)}$.

Now, we can formally describe the \emph{discounted cumulative gain} (DCG) metric as an importance sampling estimator:
\begin{flalign}\label{eq:formal_DCG}
\begin{aligned}
    &\mathop{\mathbb{E}}\limits_{(a_{1},\ldots,a_{k}) \sim \mathcal{R}(x)}[C] \approx f_{\rm DCG}(\mathcal{D}_{0},\mathcal{R})\\
     &= \frac{1}{|\mathcal{D}_{0}|\cdot|\tau|}\sum_{\tau \in \mathcal{D}_{0}}\sum_{i=1}^{|\tau|} c_{i}\cdot\frac{\mathsf{P}\left(V=1|R=\mathcal{R}(x_{i},a_{i})\right)}{\mathsf{P}(V=1|R=i)} \\
     &\approx \frac{1}{|\mathcal{D}_{0}|\cdot|\tau|}\sum_{\tau \in \mathcal{D}_{0}}\sum_{i=1}^{|\tau|}  \mathsf{P}(Q|X=x_{\tau},A=a_{i})\cdot\mathsf{P}(V=1|R=\mathcal{R}(x_{i},a_{i})).    
\end{aligned}
\end{flalign}
Through this derivation, two different views of the DCG metric arise.
That is, we either
\begin{enumerate*}
\item view it as a pure importance sampling estimator that reweights the exposure that is allocated to a certain item in a certain context, or
\item we view it as a way to de-bias observed interactions (i.e. estimate $Q$ from $C$ and $V$), and use the position-based model (Asm.~\ref{ass:pbm}) and the examination hypothesis (Asm.~\ref{ass:exam}) to obtain a final estimate of the cumulative reward.
\end{enumerate*}

If the assumptions laid out above hold, Eq.~\ref{eq:formal_DCG} provides an unbiased estimate of the online reward policy $\mathcal{R}$ will incur, based on data collected under $\mathcal{R}_{0}$; providing a strong motivation for DCG.
Even though unbiasedness is an attractive theoretical property, this estimator's variance can become problematic in cases where the logging and target policies $(\mathcal{R}_{0},\mathcal{R})$ diverge. 
We can adopt methods that were originally proposed to strike a balance between bias and variance for general IPS-based estimators, such as clipping the weights~\cite{Ionides2008,Gilotte2018}, self-normalising them~\cite{Swaminathan2015snips}, adapting the logging policy~\cite{Tucker2023}, or extending the estimator with a reward model to enable doubly robust estimation~\cite{Dudik2011,Kiyohara2022,Oosterhuis2023}.
Similarly, when the logged data that is used for offline evaluation was collected by multiple logging policies, ideas from ``\emph{multiple importance sampling}''~\cite{Elvira2019} are effective at reducing the variance of the final estimator~\cite{Agarwal2017,Kallus2021Optimal}.
\citeauthor{Saito2023_ICML} describe extensions for large action spaces~\cite{Saito2022_ICML,Saito2023_ICML}.

In the traditional IR use-case of web search, it is often assumed that we have access to human-annotated relevance labels ${\rm rel}(q,d)$ for query-document pairs $(q,d)$.
Such crowdsourced labels are seen as a proxy to $\mathsf{P}(Q|X,A)$, which makes them understandably attractive.
Nevertheless, for an offline evaluation metric to be useful in real-world recommendation systems, access to direct relevance labels is seldom a realistic requirement.
The \emph{discount} function for DCG that is most often used in practice, makes the assumption that $\mathsf{P}(V=1|R=i)=\frac{1}{\log_{2}(i+1)}$ is a good approximation for empirical exposure (see, e.g.,~\cite{Chapelle2009}).
This gives rise to the more widely recognisable form of DCG, as $\sum_{i=1}^{k}\frac{{\rm rel}(q,d_{i})}{\log_{2}(i+1)}$.
We note that neither one of these additional assumptions is likely to hold in real-world applications, which imply that even if Assumptions~\ref{ass:no_RL}--\ref{ass:full_support} hold, this estimator is \emph{biased}.
The more general form presented in Eq.~\ref{eq:formal_DCG}, however, is supported by a theoretical framework that allows for counterfactual evaluation, formally describing settings for its appropriate use.
\begin{table*}[!t]
\setlength{\fboxrule}{1pt}
\begin{minipage}{0.65\textwidth}
\begin{flushleft}
    \begin{tabular}{cccccccc}
    \toprule
    \textbf{Top-1 Model} & \textbf{DCG}($x_{1}$) & \textbf{DCG}($x_{2}$) & \textbf{nDCG}($x_{1}$) & \textbf{nDCG}($x_{2}$) &~& \textbf{DCG}($\mathbf{X}$) & \textbf{nDCG}($\mathbf{X}$) \\
    \midrule
    $\mathcal{R}(x) = a_{1}$ & 1.00 & 1.00 & 1.00 & 0.29 &~& 1.00 & \fcolorbox{Maroon}{white}{0.64}\\
    $\mathcal{R}^{\prime}(x) = a_{2}$ & 0.00 & 2.50 & 0.00 & 0.71 &~& \fcolorbox{PineGreen}{white}{1.25} & 0.36\\
    \bottomrule
    \end{tabular}
\end{flushleft}
\end{minipage}
\hspace{2ex}
\begin{minipage}{0.32\textwidth}
\begin{flushright}
    ~\\
where $\quad$ $\mathbb{E}[Q|X=x_{1},A=a_{1}] = 1.0$,\\
    $\mathbb{E}[Q|X=x_{1},A=a_{2}] = 0.0$,\\
    $\mathbb{E}[Q|X=x_{2},A=a_{1}] = 1.0$,\\
    $\mathbb{E}[Q|X=x_{2},A=a_{2}] = 2.5$,\\
    ~\hfill~$\textbf{X} = \{x_{1}, x_{2}\}$.
\end{flushright}
\end{minipage}
\caption{A proof by example that, while rankings inferred from the DCG and nDCG metrics are consistent for a \emph{single sample}, they can be \emph{inconsistent} when aggregated over multiple samples (i.e. ${\rm DCG}(\bm{X}, \mathcal{R}^{\prime}) > {\rm DCG}(\bm{X},\mathcal{R}) \nRightarrow {\rm nDCG}(\bm{X},\mathcal{R}^{\prime}) > {\rm nDCG}(\bm{X},\mathcal{R}) $).}\label{tab:proof}
\end{table*}

\section{Normalising DCG is Inconsistent}\label{sec:ndcg_inconsistent}
We have introduced the DCG metric from first principles, and have shown that under several assumptions, it can be seen as an unbiased estimator for the reward that a new ranking policy $\mathcal{R}$ will obtain. 
Nevertheless, this metric seldom appears in the research literature in its unadulterated form.
Much more prevalent is \emph{normalised} DCG (nDCG), which rescales the DCG metric to be at most $1$ for an ``ideal'' ranking.
For notational simplicity, we define ground truth quality labels as $\rho(x) = \left[ \mathbb{E}[Q|X=x,A=a] \forall a \in \mathcal{A}\right]$.
Then, we can compute \emph{ideal} DCG as the DCG obtained under an oracle ranker that yields $\mathcal{R}^{\star}\coloneqq{\texttt{sort}}(-\rho(x))$.
In what follows, we first show that this is reasonable practice under a single sample (context), in that it retains a \emph{consistent} ordering among ranking policies.
We then go on to show that:
\begin{enumerate*}
    \item nDCG yields \emph{inconsistent} orderings over competing policies in expectation when compared to DCG, and
    \item defining iDCG is problematic in the realistic setting of partial information (i.e. $Q$ and thus $\rho(x)$ are unobservable).
\end{enumerate*}

\begin{lemma}\label{lem:consistent_single}
The Discounted Cumulative Gain (DCG) and Normalised Discounted Cumulative Gain (nDCG) metrics yield consistent relative orders over a competing set of policies $\Omega$ that are being evaluated \textbf{for a single sample $\bm{x}$}.
That is, $$\mathop{\argsort}\limits_{\mathcal{R} \in \Omega} f_{\rm DCG}(x,\mathcal{R}) \equiv \mathop{\argsort}\limits_{\mathcal{R} \in \Omega} f_{\rm nDCG}(x,\mathcal{R}), \forall x \in \mathcal{X}.$$
\end{lemma}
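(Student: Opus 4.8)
The plan is to observe that, for a fixed context $x$, the normalised metric is obtained from the unnormalised one by dividing by a single \emph{positive constant that does not depend on the policy being evaluated}. Since a strictly increasing transformation never changes the outcome of $\argsort$, the two induced orderings must coincide. First I would write both quantities explicitly for a single sample: by Eq.~\ref{eq:formal_DCG}, $f_{\rm DCG}(x,\mathcal{R}) = \sum_{i} \mathsf{P}(Q|X=x,A=a_{i})\cdot\mathsf{P}(V=1|R=\mathcal{R}(x,a_{i}))$, which depends on $\mathcal{R}$ only through the ranks it assigns to the items appearing in $x$. The normaliser is ${\rm iDCG}(x) = f_{\rm DCG}(x,\mathcal{R}^{\star})$ with $\mathcal{R}^{\star} = \texttt{sort}(-\rho(x))$; crucially, $\mathcal{R}^{\star}$ is determined entirely by $x$ (through $\rho(x)$) and is the \emph{same} reference ranking for every policy in $\Omega$. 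Hence $f_{\rm nDCG}(x,\mathcal{R}) = f_{\rm DCG}(x,\mathcal{R})\,/\,{\rm iDCG}(x)$ where ${\rm iDCG}(x)$ is a constant with respect to $\mathcal{R}$.

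Second, I would verify ${\rm iDCG}(x) > 0$ so that the division is well-defined and sign-preserving. Since $\mathsf{P}(V=1|R=i) > 0$ at every observed rank and the quality terms $\mathsf{P}(Q|X=x,A=a_{i})$ are non-negative, ${\rm iDCG}(x) \ge 0$; it is strictly positive unless every quality term vanishes, in which case $f_{\rm DCG}(x,\mathcal{R}) = 0$ for all $\mathcal{R}$ and the statement is vacuous. By construction of $\mathcal{R}^{\star}$ as the oracle sort, we also have ${\rm iDCG}(x) \ge f_{\rm DCG}(x,\mathcal{R})$ for all $\mathcal{R}$, so $f_{\rm nDCG}(x,\mathcal{R}) \in [0,1]$, recovering the usual bound.

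Third, the map $t \mapsto t\,/\,{\rm iDCG}(x)$ is strictly increasing on $\mathbb{R}$, hence preserves the order relation: for any $\mathcal{R},\mathcal{R}^{\prime} \in \Omega$, $f_{\rm DCG}(x,\mathcal{R}) \ge f_{\rm DCG}(x,\mathcal{R}^{\prime})$ if and only if $f_{\rm nDCG}(x,\mathcal{R}) \ge f_{\rm nDCG}(x,\mathcal{R}^{\prime})$, and likewise with $>$ and $=$. Applying this equivalence to every pair in $\Omega$ shows that the sorted sequences agree, i.e. $\argsort_{\mathcal{R}\in\Omega} f_{\rm DCG}(x,\mathcal{R}) \equiv \argsort_{\mathcal{R}\in\Omega} f_{\rm nDCG}(x,\mathcal{R})$, which is the claim.

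The main obstacle here is bookkeeping rather than mathematical depth: one must be careful that the normalisation is genuinely policy-independent (it is, because iDCG is computed from the oracle ranking of the \emph{same} context $x$, not from the candidate policies), and that it is strictly positive, handling the degenerate all-zero-quality case separately. A minor subtlety is tie-breaking inside $\argsort$: because nDCG is a strictly monotone image of DCG, ties in one metric are exactly ties in the other, so any fixed tie-breaking rule produces identical orders, and the equivalence is unaffected.
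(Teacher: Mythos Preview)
Your proof is correct and follows essentially the same argument as the paper: for a fixed sample $x$, $f_{\rm nDCG}(x,\mathcal{R})$ is obtained from $f_{\rm DCG}(x,\mathcal{R})$ by dividing by the policy-independent constant $f_{\rm iDCG}(x)$, and division by a positive constant preserves order. Your treatment is somewhat more careful than the paper's (you explicitly handle strict positivity of ${\rm iDCG}(x)$, the degenerate all-zero case, and tie-breaking, whereas the paper only assumes $f_{\rm iDCG}$ is non-negative), but the underlying idea is identical.
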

\begin{proof}
For any given context $x \in \mathcal{X}$, assume a relative order exists between two methods $\mathcal{R}$ and $\mathcal{R}^{\prime}$ for a given non-negative metric $f$: i.e. $f(x,\mathcal{R}) \geq f(x,\mathcal{R}^{\prime})$.
Define $f^{\star}(x)$ as the \emph{ideal} metric value, i.e. the metric value that is obtained by the optimal ranking:
$f^{\star}(x) \coloneqq f(x, \mathcal{R}^{\star})$, where $\mathcal{R}^{\star} = \mathop{\argmax}\limits_{\mathcal{R}} f(x, \mathcal{R}) = {\texttt{sort}}(-\rho(x))$.

Because $f$ is a non-negative metric, $f^{\star}$ is non-negative, and we have that:
$$ f(x,\mathcal{R}) \geq f(x,\mathcal{R}^{\prime}) \Rightarrow  \frac{f(x,\mathcal{R})}{f^{\star}(x)} \geq \frac{f(x,\mathcal{R}^{\prime})}{f^{\star}(x)}. $$

DCG is, in general, not restricted to be non-negative.
However, if we assume that the \emph{ideal} discounted cumulative gain is non-negative, we have that the above inequality applies for $f \coloneqq f_{\rm DCG}$ and $f^{\star} \coloneqq f_{\rm iDCG}$.
\end{proof}

We believe that this (seemingly trivial) insight has led to the widespread adoption of nDCG as an offline evaluation metric in the recommender systems and Learning-to-Rank research fields, as normalised metric values where $1$ indicates a perfect model facilitate comparisons of methods over different datasets.
Indeed, when describing its prevalence in the literature, \citeauthor{Ferrante2021} argue that ``\emph{usually nDCG is preferred over DCG because it is bounded and normalised}''~\cite{Ferrante2021}.
Nevertheless, nDCG does \emph{not} retain consistent orderings with respect to DCG when the metrics are calculated over multiple samples and aggregated:

\begin{lemma}
The Discounted Cumulative Gain (DCG) and Normalised Discounted Cumulative Gain (nDCG) metrics yield \textbf{inconsistent} relative orders over a competing set of policies $\Omega$ that are being evaluated \textbf{over a set of samples $\bm{X}$}.
That is, $$\mathop{\argsort}\limits_{\mathcal{R} \in \Omega} f_{\rm DCG}(\bm{X},\mathcal{R}) \nequiv \mathop{\argsort}\limits_{\mathcal{R} \in \Omega} f_{\rm nDCG}(\bm{X},\mathcal{R}) \text{ in the general case}.$$
\end{lemma}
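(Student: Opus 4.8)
The statement is the logical negation of a universal claim, so to establish it I only need one instance --- a set of contexts $\bm{X}$, a competing pair $\Omega=\{\mathcal{R},\mathcal{R}^{\prime}\}$, and a model satisfying Assumptions~\ref{ass:no_RL}--\ref{ass:full_support} --- for which the orderings induced by $f_{\rm DCG}$ and $f_{\rm nDCG}$ disagree. The first step is to write out the two aggregated quantities: the dataset-level metric $f_{\rm DCG}(\bm{X},\mathcal{R})$ is the uniform average over contexts of the per-context values $f_{\rm DCG}(x,\mathcal{R})$, whereas $f_{\rm nDCG}(\bm{X},\mathcal{R})$ averages $f_{\rm DCG}(x,\mathcal{R})/f_{\rm iDCG}(x)$. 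The key observation, which tells me where to look for a counterexample, is that nDCG is a \emph{re-weighted} average of the same per-context gains that DCG averages uniformly: context $x$ is implicitly given weight $1/f_{\rm iDCG}(x)$, so a policy that concentrates its gains on ``hard'' contexts (those with small $f_{\rm iDCG}$) is favoured by nDCG relative to DCG. Two contexts with sufficiently different $f_{\rm iDCG}$ and two policies each winning on one of them should therefore suffice to invert the order.

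Concretely, the plan is to take $\bm{X}=\{x_{1},x_{2}\}$, a two-item catalogue $\mathcal{A}=\{a_{1},a_{2}\}$, two deterministic top-$1$ policies $\mathcal{R}(x)\equiv a_{1}$ and $\mathcal{R}^{\prime}(x)\equiv a_{2}$, and a (trivial) position-based model $\mathsf{P}(V=1|R=i)=1$; then fix the quality labels as listed alongside Table~\ref{tab:proof}, i.e. $\mathbb{E}[Q|x_{1},a_{1}]=1$, $\mathbb{E}[Q|x_{1},a_{2}]=0$, $\mathbb{E}[Q|x_{2},a_{1}]=1$, $\mathbb{E}[Q|x_{2},a_{2}]=\tfrac{5}{2}$. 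Eq.~\ref{eq:formal_DCG} then gives per-context DCGs $(1,1)$ for $\mathcal{R}$ and $(0,\tfrac{5}{2})$ for $\mathcal{R}^{\prime}$, hence $f_{\rm DCG}(\bm{X},\mathcal{R})=1<\tfrac{5}{4}=f_{\rm DCG}(\bm{X},\mathcal{R}^{\prime})$. The ideal ranker $\mathcal{R}^{\star}={\texttt{sort}}(-\rho(x))$ orders $a_{1}$ first in $x_{1}$ and $a_{2}$ first in $x_{2}$, so $f_{\rm iDCG}(x_{1})=1$ and $f_{\rm iDCG}(x_{2})=\tfrac{5}{2}+1=\tfrac{7}{2}$; the per-context nDCGs are then $(1,\tfrac{2}{7})$ for $\mathcal{R}$ and $(0,\tfrac{5}{7})$ for $\mathcal{R}^{\prime}$, giving $f_{\rm nDCG}(\bm{X},\mathcal{R})=\tfrac{9}{14}>\tfrac{5}{14}=f_{\rm nDCG}(\bm{X},\mathcal{R}^{\prime})$ --- matching the $0.64$ versus $0.36$ entries in Table~\ref{tab:proof}. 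Thus $\argsort_{\mathcal{R}\in\Omega}f_{\rm DCG}(\bm{X},\mathcal{R})\not\equiv\argsort_{\mathcal{R}\in\Omega}f_{\rm nDCG}(\bm{X},\mathcal{R})$, which is the claim; replacing the trivial discount by any position-based model with $\mathsf{P}(V=1|R=1)>0$ changes the numbers but preserves the inversion.

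I do not anticipate a genuine technical obstacle: because the statement only asserts that consistency fails \emph{in the general case}, a single low-dimensional witness is enough, and the construction above is already minimal. The only points requiring a sentence of care are (i) checking that the witness is admissible --- deterministic top-$1$ rankers under a position-based model satisfy Assumptions~\ref{ass:no_RL}--\ref{ass:full_support} trivially, provided the logging policy gives both items non-zero exposure in both contexts (e.g.\ a uniform logging ranker) --- and (ii) confirming $f_{\rm iDCG}(x)>0$ for every $x$ so that nDCG is well defined, which holds here since each context contains a positive-quality item. If there is a ``hard part'' at all, it is conceptual rather than computational: making explicit that the inconsistency is not an artifact of a contrived example but a structural consequence of per-sample normalisation --- turning a uniform average of gains into an $f_{\rm iDCG}$-weighted one --- a remark I would add immediately after the counterexample to pre-empt the objection that it could be patched away with a mild extra assumption.
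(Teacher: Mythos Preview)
Your proposal is correct and is essentially identical to the paper's own proof: both give a proof by counterexample using the very same two-context, two-item, two-policy witness from Table~\ref{tab:proof}, and your arithmetic reproduces the table's entries exactly. The additional intuition you supply --- that per-sample normalisation turns a uniform average into an $f_{\rm iDCG}$-weighted one, so policies that win on low-$f_{\rm iDCG}$ contexts are favoured by nDCG --- is a nice addition, but the argument itself is the same.
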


\begin{proof}
    We provide a proof by counterexample, for which the details are presented in Table~\ref{tab:proof}.
    Indeed, even though the $f_{\rm DCG}$ and $f_{\rm nDCG}$ metrics align for every sample in \emph{isolation} (see columns for $x_1, x_2$), they are inconsistent in \emph{aggregate} (see columns for $\bm{X}$).
\end{proof}

Discrepancies between (n)DCG have been touched upon in the IR literature, focused on search engine evaluation and blaming ``\emph{a limited number of relevance judgments}''~\cite{AlMaskari2007}.
Table~\ref{tab:proof} shows that the issue has deeper roots than this: the normalisation procedure is \emph{inconsistent}.
This insight is problematic, as virtually all offline evaluation protocols consist of first aggregating evaluation metrics over sets of samples, and then inferring preferences over competing policies based on these metric averages.
Our work shows that, when the assumptions laid out in Section~\ref{sec:dcg_unbiased} are met and DCG provides an unbiased estimate of reward, this gives rise to a theoretically sound model selection protocol.
The same statement does \emph{not} hold for nDCG as it widely appears in the research literature (see e.g. \cite[Eq. 8.9]{schutze2008introduction} and \cite[Eq. 5]{Valcarce2020}).
We hypothesise that the normalisation formula has been widely adopted for its ease-of-use, rather than for its theoretical properties.
This suggests that nDCG is of limited practical use as an offline evaluation metric, and that it should be avoided by researchers and practitioners who wish to use DCG for offline evaluation and model selection purposes.

We provide empirical evidence of discrepancies between (n)DCG on common top-$n$ recommendation evaluation tasks on publicly available data in Appendix~\ref{sec:appx}.

In the odd case where metric values that maximise at $1$ are required, we propose the use of a \emph{post}-normalisation procedure, where $f_{\rm pnDCG}(\mathcal{D},\mathcal{R}) = \frac{f_{\rm DCG}(\mathcal{D},\mathcal{R})}{f_{\rm iDCG}(\mathcal{D})}$.
Recent work on LTR in IR leverages this nDCG formulation~\cite{Oosterhuis2022}.
Indeed, through Lemma~\ref{lem:consistent_single}, one can trivially show that this metric \emph{is} consistent with respect to $f_{\rm DCG}$.

Nevertheless, we wish to advise against this practice altogether, as computing the \emph{ideal} DCG metric implies that we must construct the \emph{ideal} ranking policy $\mathcal{R}^{\star}$.
To do so, we require full knowledge of $\rho(x)$, which is hardly realistic in real-world scenarios.
In traditional IR use-cases where human-annotated relevance labels are available as \emph{ground truth}, these labels can be used to inform $\rho(x)$.
In academic recommendation datasets where we have \emph{explicit} feedback and \emph{full observability} of the user-item matrix, this can inform $\rho(x)$ similarly.
In practical applications, however, we typically estimate $\widehat{\rho}(x) \approx \rho(x)$ from logged implicit feedback.
Aside from the problems that occur when accurately interpreting this feedback~\cite{Diaz2021}, such logged datasets are known to be riddled with biases that complicate estimating $\widehat{\rho}(x)$~\cite{Jeunen2021A}, resulting in only partial observability and noisy estimates that should \emph{not} be taken at face value to inform an ``\emph{optimal}'' ranker.\looseness=-1

One final argument in favour of nDCG, is that it partially alleviates the impact of outliers.
Indeed, the normalisation procedure rescales the contribution of every sample, which can be preferable in cases where strong outliers are present.
Nevertheless, in such scenarios, we would propose to \emph{first} devise a more appropriate online metric than the average cumulative reward, and \emph{then} derive an offline estimator for this quantity, rather than trying to repurpose the existing DCG estimator.
Exactly what such metrics and estimators would look like, is an interesting area for future work.

\section{Experimental Results \& Discussion}\label{sec:experiments}
Until now, we have derived the theoretical conditions that are necessary to consider DCG an unbiased estimator of online reward, and we have highlighted both theoretically and empirically that nDCG deviates from this.
In what follows, we wish to empirically validate whether we can leverage the metrics effectively to estimate online reward for deployed ranking policies, for recommender systems running on large-scale web platforms.
The research questions we wish to answer with empirical evidence, are the following:
\begin{description}
    \item[\textbf{RQ1}] \textit{Are offline evaluation results obtained using DCG correlated with online metrics from deployed models?}
    \item[\textbf{RQ2}] \textit{Are offline evaluation results obtained using normalised DCG correlated with online metrics from deployed models?}
    \item[\textbf{RQ3}] \textit{Does the proposed unbiased DCG formulation with learnt position biases and de-biased interaction labels improve correlation with online metrics over the classical and widely adopted (yet biased) DCG formulation?}
    \item[\textbf{RQ4}] \textit{Are differences in any of the considered offline evaluation metrics predictive of differences in online metrics?}
\end{description}
We focus on correlations between off- and online metrics rather than exact estimation error, because downstream business logic prevents the model output to \emph{exactly} match online behaviour~\cite{Jakimov2023}.

To provide empirical evidence for research questions 1--4, we require access to a \emph{ground truth} online metric.
There are two families of evaluation methods we can consider to obtain this:
\begin{enumerate*}
    \item simulation studies, or
    \item online experiments.
\end{enumerate*}
Both come with their own (dis)advantages.
Indeed, simulation studies are generally reproducible and allow full control over the environment to investigate which of the assumptions laid out in the theoretical sections of this work are necessary to retain DCG's utility as an online metric.
Nevertheless, simulations require us to make additional assumptions about user behaviour, that are often non-trivial to validate.
As a result, they would provide no empirical evidence on real-world value of the metrics, and are limited in the insights they can bring.

Online experiments, on the other hand, are harder to reproduce.
Notwithstanding this, they allow us to directly measure real user behaviour and give a view of the utility of the DCG metric for offline evaluation purposes in a real-world deployed system.
This can guide practitioners who need to perform such evaluations.
We focus on the online family for the remainder of this work, noting that simulations provide an interesting avenue for future research.

We use data from a large-scale social media platform that utilises a two-stage ranking system as described earlier to present users with a personalised feed of short videos they might enjoy.
The platform operates a hierarchical feed where users are presented with a 1\textsuperscript{st}-level feed they can scroll through and engage with content, and users can enter a 2\textsuperscript{nd}-level ``\emph{more-like-this}'' feed via any given 1\textsuperscript{st}-level item.\footnote{Examples of well-known social media platforms that operate similar user interfaces include Instagram, Reddit, and ShareChat.}
Because of the differences in interaction modalities and the user interface between the two feeds, they require separate models to estimate position bias, and we separate them in our analysis.
The 1\textsuperscript{st}-level feed adopts a recently proposed probabilistic position bias model~\cite{Jeunen2023_C3PO}, whereas the 2\textsuperscript{nd}-level feed adopts an exponential form (such as the one underlying rank-biased precision~\cite{Moffat2008}).
Because of this difference, the importance weights in the 2\textsuperscript{nd}-level feed exhibit much larger variance, and we adopt a clipping parameter for IPS which we set at $200$ to compute the de-biased DCG metric on this data (and vary it in Section~\ref{sec:RQ3}).
\emph{Rewards} on a short-video platform can be diverse.
We collect both implicit signals (e.g. watching a video until the end) and explicit signals (e.g. liking a video), and consider both types of rewards for our on- and offline metrics, referring to them as $\mathcal{C}_{\rm imp}$ and $\mathcal{C}_{\rm exp}$ respectively.
\sisetup{detect-weight=true,detect-inline-weight=math}
\begin{table*}[t]
    \centering
    \begin{tabular}{cccRpcRpcRpcRpccRp}
    \toprule
     ~&~& ~ & \multicolumn{11}{c}{\textbf{DCG}} &~&~& \multicolumn{2}{c}{\textbf{nDCG}} \\
    \cmidrule(l){4-14}\cmidrule(l){17-18}
     ~&~ & ~ & \multicolumn{2}{c}{\textbf{log}, $\mathbf{C}$} &~& \multicolumn{2}{c}{\textbf{log}, $\mathbf{\hat{Q}}$}  &~& \multicolumn{2}{c}{\textbf{pbm},  $\mathbf{C}$} &~& \multicolumn{2}{c}{\textbf{pbm}, $\mathbf{\hat{Q}}$}  &~&~& \multicolumn{2}{c}{\textbf{pbm}, $\mathbf{\hat{Q}}$} \\
    \cmidrule(l){4-5}\cmidrule(l){7-8}\cmidrule(l){10-11}\cmidrule(l){13-14}\cmidrule(l){17-18}
\rowcolor{White} ~& ~ & ~ & $\bm{r}$ & $\bm{p}$\textbf{-val} &~& $\bm{r}$ & $\bm{p}$\textbf{-val} &~& $\bm{r}$ & $\bm{p}$\textbf{-val} &~& $\bm{r}$ & $\bm{p}$\textbf{-val}  &~&~& $\bm{r}$ & $\bm{p}$\textbf{-val} \\
    \midrule
\multirow{4}{*}{1\textsuperscript{st}-level} &
 \multirow{2}{*}{$C_{\rm exp}$} & $\mathcal{R}_{0}$ & 0.97\textsuperscript{\dag} & <0.01 &~& 0.96\textsuperscript{\dag} & <0.01 &~& 0.97\textsuperscript{\dag} & <0.01 &~& \bfseries0.98\textsuperscript{\dag} & <0.01 &~&~& -0.91\textsuperscript{\dag} & 0.01\\
~& ~ & $\mathcal{R}_{t}$ & 0.86\textsuperscript{\dag}  & 0.03 &~& 0.86\textsuperscript{\dag} & 0.03 &~& 0.87\textsuperscript{\dag} & 0.03 &~& \bfseries0.91\textsuperscript{\dag} & 0.01 &~&~& -0.80 & 0.06\\
\vspace{-2ex}~&~&~&~&~&~&~&~&~&~&~&~&~&~&~&~&~&\\
~& \multirow{2}{*}{$C_{\rm imp}$} & $\mathcal{R}_{0}$ & 0.57 & 0.24 &~& \bfseries0.70 & 0.12 &~& 0.57 & 0.24 &~& \bfseries0.70 & 0.12 &~&~& -0.80 & 0.06\\
~& ~ & $\mathcal{R}_{t}$ & 0.05 & 0.92 &~& \bfseries0.17 & 0.73 &~& 0.05 & 0.92 &~& \bfseries0.17 & 0.75 &~&~& -0.32 & 0.55\\
\vspace{-2ex}~&~&~&~&~&~&~&~&~&~&~&~&~&~&~&~&~&\\
\cdashline{1-18}\\
\vspace{-5ex}~&~&~&~&~&~&~&~&~&~&~&~&~&~&~&~&~&\\
\multirow{4}{*}{2\textsuperscript{nd}-level} &
\multirow{2}{*}{$C_{\rm exp}$} & $\mathcal{R}_{0}$ & 0.75 & 0.14 &~& 0.71 & 0.18 &~& 0.77 & 0.13 &~& \bfseries0.79 & 0.11 &~&~& -0.42 & 0.48\\
 ~&~ & $\mathcal{R}_{t}$ & 0.28 & 0.65 &~& 0.25 & 0.68 &~& 0.38 & 0.53 &~& \bfseries0.49 & 0.40 &~&~& -0.34 & 0.57\\
\vspace{-2ex}~&~&~&~&~&~&~&~&~&~&~&~&~&~&~&~&~&\\
~& \multirow{2}{*}{$C_{\rm imp}$} & $\mathcal{R}_{0}$ & 0.10 & 0.88 &~& 0.58 & 0.30 &~& 0.48 & 0.41 &~& \bfseries0.85 & 0.07 &~&~& -0.93\textsuperscript{\dag} & 0.02\\
~& ~ & $\mathcal{R}_{t}$ & 0.55 & 0.34 &~& \bfseries0.88\textsuperscript{\dag} & 0.05 &~& 0.70 & 0.19 &~& \bfseries0.88\textsuperscript{\dag} & 0.05 &~&~& -0.77 & 0.13\\
    \bottomrule
    \end{tabular}
\caption{Correlation between online reward as measured from an A/B-test and offline evaluation metrics. We consider both \emph{explicit} and \emph{implicit} reward signals, on two levels of a hierarchical feed structure on a short-video platform. We consider DCG with a logarithmic discount (log) and a learnt model (pbm); using interaction signals directly ($C$) or de-biasing them to estimate $\hat{Q}=C/\mathsf{P}(V|R)$; for both a logging and target policy $\mathcal{R}_{0},\mathcal{R}_{t}$.
We report Pearson's correlation coefficient $r$ and a two-tailed $p$-value using Student's correlation test~\cite{Student1908}.
Statistically significant correlations ($p<0.05$) are marked\textsuperscript{\dag}, best performers are bold.}~\label{tab:results}
\end{table*}
\subsection{Offline--Online Metric Correlation (RQ1--3)}\label{sec:RQ1}
We collect data over a week of a deployed online experiment with over 40 million users where we deployed a change to a deterministic ranking policy $\mathcal{R}$, and kept the candidate generator $\mathcal{G}$ fixed.
This simplifies the exposure calculation that is used in the offline evaluation metrics in Eq.~\ref{eq:exposure} to that in Eq.~\ref{eq:formal_DCG}, and ensures that the variance of the offline estimator is lower.
For every day $d$ in the experiment, we
\begin{enumerate*}
    \item aggregate online results per day $d$ as the average number of logged positive feedback samples per session, and
    \item collect a dataset $\mathcal{D}^{d}_{0}$ which we use to compute offline metrics (through Eq.~\ref{eq:formal_DCG} and variants thereof).
\end{enumerate*}
Then, we compute Pearson's correlation coefficient between the series of $d$ online metrics from (1), and the $d$ offline estimates from (2), for competing evaluation metrics.
Table~\ref{tab:results} presents results from this experiment, with a detailed description in the caption.
Results here align with what theory would suggest: the unbiased DCG variant that we have formally derived in Section~\ref{sec:dcg_unbiased} provides the strongest correlation with online reward.
Both adopting a \emph{learnt} position bias model as opposed to the classical logarithmic form, and de-biasing observed interaction labels as opposed to na\"ively using them, have a significant effect on the performance of the final estimator.
On the 1\textsuperscript{st}-level feed, we observe that our estimator works especially well with an explicit reward signal.
This somewhat reverses for the 2\textsuperscript{nd}-level feed, where the implicit reward signal leads to a stronger correlation with online results.
We hypothesise that this is directly related to the degree with which the assumptions laid out in Section~\ref{sec:dcg_unbiased} are violated.
Users leave the 1\textsuperscript{st}-level feed and enter the 2\textsuperscript{nd}-level feed if they click on a 1\textsuperscript{st}-level video.
As such, the implicit reward $\mathcal{C}_{\rm imp}$ of succesfully watching a video is highly dependent on other items in the feed, violating Asm.~\ref{ass:no_slate} (reward independence across ranks), as well as Asm.~\ref{ass:pbm} (absence of cascading behaviour).
Note that all assumptions are expected to be violated to some degree --- but the strongly positive correlation results presented in Table~\ref{tab:results} are promising.

Somewhat surprisingly, not only does (unbiased) normalised DCG exhibit worse correlation than DCG, it provides a strongly \emph{negative} correlation.
At first sight, this seems troubling.
Nevertheless, we provide an intuitive explanation and highlight that this result alone does \emph{not} imply that the metric cannot be useful for offline evaluation purposes.
Note that discrepancies stemming from the normalisation procedure have a disproportionate impact when the reward is unevenly distributed across sessions or days.
Suppose we observe two sessions: one with a single positive label over two impressions, and one with 10 positive labels over 1\,000 impressions.
Because DCG deals with absolute numbers, the second session will bear a weight proportional to its positive reward.
Normalised DCG, on the other hand, considers the relative distance to the optimal ranking.
If we reasonably assume that our ranking model is imperfect, the distance to the optimal ranking is likely to be \emph{higher} for the second session, and nDCG will be lower as a result (even though we have higher DCG).
This same argument can be made across different days in the experiment, explaining poor correlation results over time.
Notwithstanding this, it does not necessarily imply that nDCG holds no merit as an offline evaluation metric.
In what follows, we consider a more important question, focusing on \emph{differences} in online metrics instead.

\subsection{Offline--Online Metric Sensitivity (RQ4)}\label{sec:RQ3}
To consider this research question, we restrict ourselves to a setting where we know that strong statistically significant ($p\ll 0.001$) improvements in online metrics are observed for the target policies $\{\mathcal{G}_{t},\mathcal{R}_{t}\}$ over the logging policies $\{\mathcal{G}_{0},\mathcal{R}_{0}\}$.
We restrict our analysis to the 2\textsuperscript{nd}-level feed, as it generates the majority of user-item interactions and relatively long sessions. 
We consider a variety of explicit and implicit feedback signals as rewards, all of which saw statistically significant improvements in the online experiment.
\begin{figure*}[!t]
    \centering
    \begin{subfigure}[t]{0.48\textwidth}
        \centering
        \includegraphics[width=.95\textwidth, trim = {0cm 5mm 0cm 5mm}]{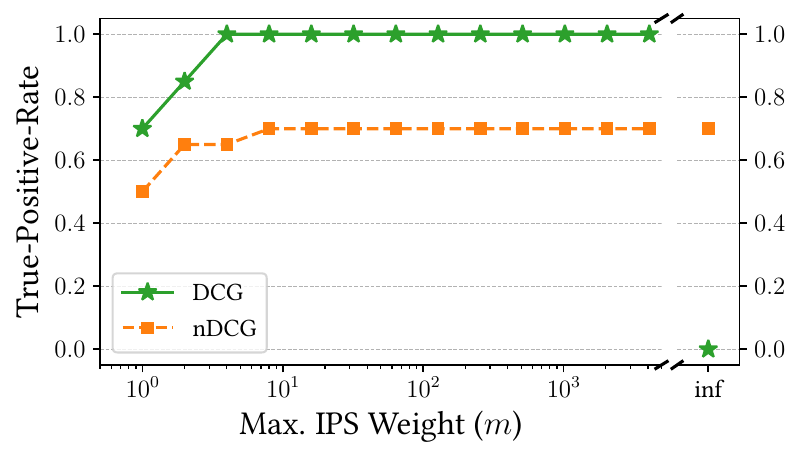}
        \caption{True-Positive-Rate for offline metrics ($\uparrow$).}\label{fig:sensitivity_TPR}
    \end{subfigure}%
    ~ 
    \begin{subfigure}[t]{0.48\textwidth}
        \centering
        \includegraphics[width=1.01\textwidth, trim = {0cm 5mm 0cm 5mm}]{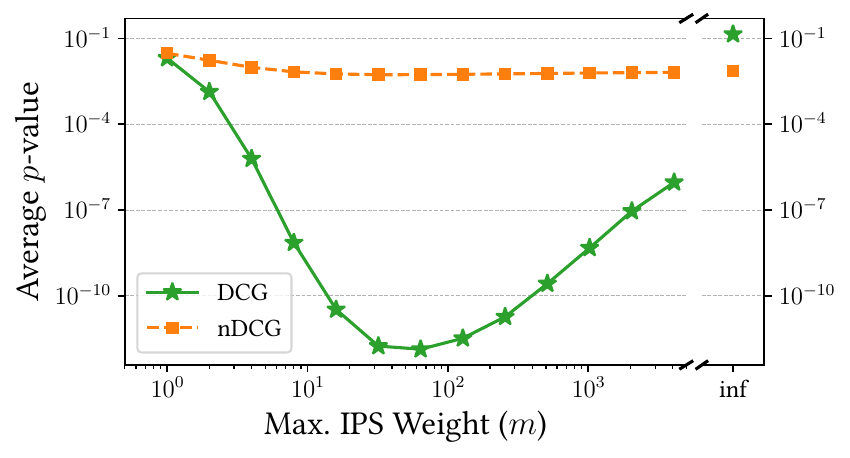}
        \caption{Average $p$-values for offline metrics ($\downarrow$).}\label{fig:sensitivity_P}
    \end{subfigure}
    \caption{Sensitivity measures (\emph{y-axis}) of (n)DCG for varying values of the capping parameter in IPS (\emph{x-axis}).}\label{fig:sensitivity_results}
 \Description[Figure described in text.]{Figure described in text.}
\end{figure*}
We consider three possible notions of \emph{alignment} between on- and offline metrics, in increasing levels of expressivity: 
\begin{enumerate*}
    \item Without considering statistical significance, do differences in offline metrics directionally align with differences in online metrics? (i.e. \emph{sign agreement}) 
    \item For statistically significant improvements (as validated by the online experiment with $p\ll 0.001$), does the offline metric show statistically significant improvements with $p < 0.01$? (i.e. \emph{True-Positive-Rate}, recall or sensitivity).
    \item For statistically significant improvements, what confidence does the offline metric have in the improvements? (i.e. the $p$-values).
\end{enumerate*}
We consider 5 days of a deployed online experiment and 4 different reward signals, yielding 20 distinct statistically significant online metric improvements.
The purpose of our offline estimators, is to reflect these statistically significant \emph{online} differences in their \emph{offline} estimates. 
As offline estimators, we consider a clipped variant of the unbiased DCG metric in Eq.~\ref{eq:unbiased_reward}, where the factor for the inverse exposure propensity is replaced with $\min\left(m, \frac{1}{\varepsilon_{0}}\right)$, for varying values of $m$~\cite{Ionides2008,Gilotte2018}.
Although this renders the metric \emph{biased} in a pessimistic way~\cite{Jeunen2023Pessimism}, its reduced variance can yield more favourable performance as an offline evaluation metric.
For every variant of the DCG metric we construct in this way, we compute the analogous \emph{normalised} DCG metric.
Because these metrics are aggregated over trajectories, we can use their empirical means and standard deviations to construct normal confidence intervals for the metric values and their differences (i.e. the treatment effect).
If the 99\% confidence interval for the metric difference is strictly positive, we say the metric indicates a statistically significant improvement (with $p <0.01$).
Because all online metric differences we consider were statistically significant, we know that they are \emph{true positives}, and we can compute the \emph{sensitivity} or \emph{True-Positive-Rate} (TPR) for the offline metric by counting how often it indicates a statistically significant \emph{offline} difference for these true positives.
We additionally record the average $p$-value for the null hypothesis (i.e. the hypothesis that the metric difference is $\leq 0$), obtained from the confidence intervals.
To measure the weaker notion of \emph{sign agreement}, we only consider the mode of the confidence interval $\mu$, and count ``agreement'' ${\rm iff }~\mu > 0$.
We vary the clipping hyper-parameter for IPS as $m \in \{1, 2, 4, 8, 16, 32, 64, 128, 256, 512, 2048, 4096, \inf\}$, where $m=1$ corresponds to directly using the interaction labels $C$, and $m=\inf$ yields an unbiased estimator with higher variance.

Reassuringly, \emph{all} offline estimators exhibit 100\% directional sign agreement with the true treatment effect we observe for serving personalised recommendations to users through $\{\mathcal{G}_{t},\mathcal{R}_{t}\}$ over $\{\mathcal{G}_{0},\mathcal{R}_{0}\}$.
Results for our sensitivity analysis are visualised in Figure~\ref{fig:sensitivity_results}.
In Figure~\ref{fig:sensitivity_TPR}, higher values indicate that the offline evaluation metric is more likely to detect statistically significant improvements in the online metric, averaged over the 20 settings described above.
Analogously, lower values in Figure~\ref{fig:sensitivity_P} indicate that the metric yields more statistical confidence.
Lower $p$-values in Figure~\ref{fig:sensitivity_P} additionally imply that the metric requires less data to achieve the significance level, potentially reducing costs~\cite{Kharitonov2017}. 
We observe that introducing IPS weighting (i.e. $m>1$) to account for position bias in the logged interactions leads to improved sensitivity.
This results holds for both DCG and nDCG, and both for the TPR metric and the average $p$-values.
We additionally observe that for a wide range of clipping values, the DCG metric has a higher TPR (lower $p$-values) than nDCG.
Intuitively, this can be explained by the fact that nDCG essentially squashes a metric with a high expressive range to the $[0,1]$ domain, which can only come at a cost of discriminative power.
DCG, on the other hand, directly models the online metrics we care about (under the assumptions laid out in Section~\ref{sec:dcg_unbiased}).

When we do not clip the IPS weights (i.e. $m=\inf$), we observe from Figure~\ref{fig:sensitivity_results} that the variance of the DCG metric increases to a point where its sensitivity is harmed (even if directional alignment is maintained).
Note that the nDCG metric is not affected by this, as its values are bounded and they exhibit lower variance as a result.
Intuitively, whereas low values of $\varepsilon_{0}$ can blow up the unbiased DCG formulation, they will also do this for \emph{ideal} DCG, and their ratio (i.e. nDCG) will be less likely to suffer from this.
We observe that with clipped propensities, even at large values, DCG leads to superior sensitivity over nDCG, striking a favourable bias-variance trade-off.

Our experimental results show promise in using DCG as an offline estimator of online reward, and the bias-variance trade-off that is clearly visualised in Figure~\ref{fig:sensitivity_P} helps us to tune this hyper-parameter $m$ properly.
Even when \emph{all} of the metric's underlying assumptions violated to \emph{some} degree, its value is apparent.
\vspace{-2ex}\section{Perspectives going forward}\label{sec:beyond}
In what follows, we revisit the assumptions that are \emph{necessary} to consider the DCG metric an unbiased estimator of online reward.

\textit{1. Reward independence across trajectories} is necessary to avoid having to model any internal user \emph{state} that is influenced by actions taken by a ranking policy.
Indeed, if we do allow this to happen, we must resort to Reinforcement Learning (RL) formulations of our problem, which inhibits the simple form that DCG allows.
Nevertheless, unbiased evaluation of RL policies is an active research area, which has found applications in recommendation research~\cite{Chen2019}, also for two-stage policies (without considering rankings)~\cite{Ma2020}.
\citeauthor{Ie2019SlateQ} can provide inspiration for learnt RL policies in top-$n$ recommendation domains with DCG-like reward structures~\cite{Ie2019SlateQ}.

\textit{2. Position-based model (PBM).}
The classical PBM allows for general formulations of $\mathsf{P}(V|R)$, including the widely adopted functional form $\frac{1}{\log_{2}(i+1)}$.
The recently proposed Contextual PBM~\cite{Fang2019} can be plugged into Eq.~\ref{eq:exposure} to directly provide an unbiased DCG formulation with a context-dependent discount function, enjoying the same theoretical guarantees we have derived for DCG under the PBM.
A variety of other click models~\cite{chuklin2015click} have been proposed in the research literature~\cite{Borisov2016,Chen2020_CACM}, as well as ways to evaluate them~\cite{Deffayet2022}.
We expect that our work provides a basis for further connections to be drawn between click models and unbiased evaluation metrics.

\textit{3. Reward independence across ranks.}
When we do not assume any structure between the actions taken by the ranking policy and the observed rewards, the problem quickly becomes intractable, as we suffer from a combinatorial explosion of the action space.
This is a well-known problem, and the independence assumption has been adopted (either explicitly or implicitly) by a wide array of related work~\cite{Swaminathan2017,Ie2019SlateQ, Bendada2020,Jeunen2021B}.
Note that this assumption does not simply relate to \emph{observing} rewards, but to the underlying distribution of $Q$.
Indeed, related work that adopts a cascading user behaviour model also relies on this assumption, as the cascade relates to the distribution of $V$ (and thus $C$) rather than that of $Q$~\cite{McInerney2020, Kiyohara2022}.
Evaluation metrics have been proposed to encode concepts of listwise novelty and diversity into DCG-like formulations for top-$n$ recommendations~\cite{Clarke2008,Parapar2021}; we conjecture they can be extended to our unbiased setup as well.

\textit{4. The examination hypothesis} implies that \emph{exposure} bias (through $V$) is the main culprit that makes $C$ a noisy indicator of $Q$.
Differences in exposure can then purely come from \emph{position} bias (as in the PBM), but they can also be perpetuated by \emph{selection} bias (as made evident by Eq.~\ref{eq:exposure})~\cite{Diaz2020,Jeunen2021B}.
Other sources of bias have been raised in the literature, such as presentation or trust bias~\cite{Agarwal2019TrustBias,Vardasbi2020}.
We expect that these types of biases can be incorporated into the theoretical derivation of Sec.~\ref{sec:dcg_unbiased} to devise DCG-like formulations that remain unbiased estimators of online reward, even when additional biases are present.
Naturally, such biases are use-case-specific.

\textit{5. Full support of the logging policy.}
The main assumption that makes IPS work, is that no actions with non-zero probability under the target policy can have zero probability under the logging policy.
Indeed, if a context-action pair is known not to be present in the data, we cannot make any inferences about its reward (with guarantees).
This is at the heart of policy-based estimation, but especially problematic in real-world systems where the action space is large and the cost of such \emph{full} randomisation, even with small probabilities, can be high.
Recent work in \emph{learning} from bandit feedback deals with such cases empirically~\cite{JeunenKDD2020,Jeunen2021A} and theoretically~\cite{Sachdeva2020, Lopez2021}, providing a source of inspiration to (partially) alleviate these issues.
\vspace{-4ex}\section{Conclusions \& Outlook}\label{sec:conclusion}
Offline evaluation of recommender systems is a common task, and known to be problematic.
This work investigates the commonly used (normalised) discounted cumulative gain metric and its uses in the research literature.
Specifically, we have investigated \emph{when} we can expect such metrics to approximate the gold standard outcome of an online experiment.
In a counterfactual estimation framework, we formally derived the necessary assumptions to consider DCG an unbiased estimator of online reward.
Whilst it is reassuring that such assumptions exist and we can directly map DCG to online metrics --- we also highlighted how this \emph{ideal} use deviates from the traditional uses of the metric in IR, and how it often appears in the research literature.
We then shifted our focus to \emph{normalised} DCG, and demonstrated its \emph{inconsistency}, both theoretically and empirically with reproducible experiments.
Indeed, even when all neccesary assumptions hold and DCG provides unbiased estimates of online reward, nDCG \emph{cannot} be used to rank competing models, as it does does \emph{not} preserve the rankings we would obtain from DCG.\looseness=-1

Through a correlation analysis between results obtained from off- and on-line experiments on a large-scale recommendation platform, we show that our \emph{unbiased} DCG estimates strongly correlate with online metrics in a real-world use-case. 
Additionally, we show how the offline metric can be used to detect statistically significant online improvements with high sensitivity, further highlighting its promise for offline evaluation in both academia and industry.
Normalised DCG, on the other hand, suffers from a weaker correlation with online results, and lower sensitivity than DCG.
These results suggest that nDCG's practical utility may be limited.

We believe our work opens up interesting areas for future research, where our theoretical framework can be extended to formally assess the assumptions required by other commonly used evaluation metrics in the field.
Furthermore, theoretical and empirical connections between other types of commonly used online evaluation metrics (e.g. user retention) would be fruitful.

\begin{acks}
We are grateful to Lien Michiels, co-author of RecPack~\cite{Michiels2022}, for early feedback and help setting up the experiment in Appendix~\ref{sec:appx}.
\end{acks}

\bibliographystyle{ACM-Reference-Format}
\bibliography{bibliography}

\appendix
\begin{figure}[!t]
    \centering
    \vspace{-2ex}
    \includegraphics[width=\linewidth]{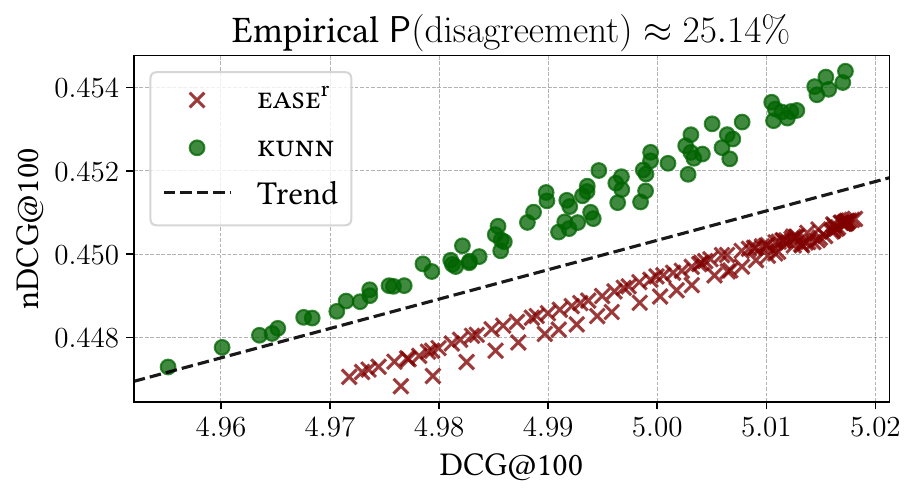}
    \caption{DCG and nDCG exhibit significant disagreement for a standard offline evaluation setup on MovieLens-1M.}
    \label{fig:disagreement}
\end{figure}

\section{Empirical Evidence of (n)DCG Inconsistency on Public Data}\label{sec:appx}
Table~\ref{tab:proof} provides a formal proof that an ordering over competing recommendation (or IR) models obtained through a normalised metric is \emph{not} guaranteed to be consistent with the original metric.
Nevertheless, one might wonder whether this single example represents a misguided pathological case, or whether metric disagreement occurs in practice.
This gives rise to the research question:
\begin{description}    
    \item[\textbf{RQ5}] \textit{Do DCG and normalised DCG disagree when ranking recommendation models in typical offline evaluation setups?}
\end{description}
To answer this question, we make use of the RecPack Python package~\cite{Michiels2022} and the MovieLens-1M dataset~\cite{Harper2015}.
We consider two types of models, \textsc{ease}\textsuperscript{r}~\cite{Steck2019} and \textsc{kunn}~\cite{Verstrepen2014}, varying their hyperparameters to train 192 models on a fixed 50\% of the available user-item interactions, and assess their performance on the held-out 50\%.
This style of evaluation setup is prevalent in the recommendation field~\cite{Steck2013,Jeunen2019DS,Zangerle2023}.
We adopt this package, dataset and methods to provide a reproducible setup that runs in under 20 minutes on a 2021 MacBook Pro.
All source code, including hyperparameter ranges, is available at \href{https://github.com/olivierjeunen/nDCG-disagreement/}{github.com/olivierjeunen/nDCG-disagreement}.

We do not de-bias the interactions (as MovieLens does not provide information about exposure), and adopt the traditional logarithmic discount for DCG with a cut-off at rank 100.
Results are visualised in Figure~\ref{fig:disagreement} with DCG@100 on the $x$-axis and nDCG@100 on the $y$-axis.
The two metrics exhibit a linear correlation of $\approx 0.6$ (Pearson), and a rank correlation of $\approx 0.5$ (Kendall).
Whilst they are clearly correlated, practitioners should \emph{not} blindly adopt nDCG when DCG estimates their online metric.
Indeed, DCG can be formulated as an unbiased estimator of the average reward per trajectory, but nDCG cannot.
As can be seen from the plot, significant \emph{disagreement} occurs between the two metrics: when randomly choosing two observations, the empirical probability of nDCG inverting the ordering implied by DCG is roughly 25\% on this example.
Naturally, one would expect this type of disagreement to occur even more frequently when considering Learning-to-Rank algorithms that directly optimise listwise objectives such as (n)DCG~\cite{Jagerman2022,Ustimenko2020,Lyzhin2023}.

Note that this discrepancy would not occur if we would sample the exact same number of held-out items for every user (as in Leave-One-Out Cross-Validation).
Indeed, in such cases $f^{\star}(x)$ is constant $\forall x \in \mathcal{X}$, simply rescaling the metric.
Whilst this practice can be common in academic scenarios, real-world use-cases typically imply varying numbers of ``relevant'' items per user or context.

We include these results to aid in the reproducibility of the empirical phenomena we report in this work.

\end{document}